\theoremstyle{definition}
 \newcounter{thm}
 \newcounter{ex}
 \newcounter{re}
 \newtheorem{Proposition}[thm]{Proposition}
 \newtheorem{Example}[ex]{Example}
 \newtheorem{Definition}[thm]{Definition}
\renewcommand{\arraystretch}{1.2}
\newcolumntype{Y}{>{\centering\arraybackslash}X}
\newcolumntype{b}{Y}
\newcolumntype{s}{>{\hsize=.75\hsize}Y}
\newcolumntype{q}{>{\hsize=.5\hsize}Y}
\newcommand{\vertex}{\mbox{$n$}}
\newcommand{\edge}{\mbox{$m$}}
\newcommand{\width}{\mbox{$c$}}
\newcommand{\chain}{\mbox{$\widehat{c}$}}
\author{Nicolas Boria\affiliationmark{1}
  \and Gianpiero Cabodi\affiliationmark{1}
  \and Paolo Camurati\affiliationmark{1}
  \and Marco Palena\affiliationmark{1}
  \and Paolo Pasini\affiliationmark{1}
  \and Stefano Quer\affiliationmark{1}
}
\title[A Greedy Approach to Answer Reachability Queries on DAGs]{A Greedy Approach to Answer Reachability Queries on DAGs}
\affiliation{
  Dip.\ di Automatica ed Informatica,  Politecnico di Torino, Torino, Italy\\
}
\keywords{graph theory; DAG; reachability; transitive closure; data structure; greedy algorithm.}
\begin{document}
\publicationdetails{VOL}{YYYY}{ISS}{NUM}{SUBM}
\maketitle

\begin{abstract}
Several modern applications involve huge graphs and require fast
answers to reachability queries.
In more than two decades since first proposals, several approaches
have been presented adopting on-line searches, hop labelling or
transitive closure compression.
Transitive closure compression techniques usually construct a
graph reachability index, for example by decomposing the graph into disjoint
chains.
As memory consumption is proportional to the number of chains, the
target of those algorithms is to decompose the graph into an optimal
number \width\ of chains.
However, commonly used techniques fail to meet general expectations,
are exceedingly complex, and their application on large graphs is
impractical.
The main contribution of this paper is a novel approach to construct
such reachability indexes.
The proposed method decomposes the graph into a sub-optimal number
\chain\ of chains by following a greedy strategy.
We show that, given a vertex topological order, such a
decomposition is obtained in
$\mathcal{O}(\chain \edge)$ time, and requires
$\mathcal{O}(\chain \vertex)$ space, with
\chain\ ~bounded by
$[\width \log(\frac{\vertex}{\width})]$.
We provide experimental evidence suggesting that, on different
categories of automatically generated benchmarks as well as on graphs
arising from the field of logic synthesis and formal verification, the
proposed method produces a number of chains very close to the optimum,
while significantly reducing computation time.
\end{abstract}




\section{Introduction}

The problem of answering reachability queries in an efficient way
plays an important role in several modern applications.
Domains in which this problem arises are as diverse as office systems,
software management, geographical navigation, Internet routing and XML
indexing.
For example, in the domain of automated synthesis and verification of
digital systems, both combinational and sequential circuits can be
modelled, at some level of abstraction, as directed acyclic graphs, and
it is often required to solve reachability queries on such
graphs in order to discover variable dependencies or to compute
variable scorings~\cite{date2013coi,spe2015}.

%
%
%


In order to answer reachability queries in constant time, a na{\"i}ve
approach would explicitly store the transitive closure of the graph,
investing $\Theta (\vertex^{2})$\footnote{%
  Given a graph $G=(V,E)$, we use $\vertex$ to denote the cardinality
  of the set of vertices $|V|$, and $\edge$ to indicate the
  cardinality of the set of edges $|E|$.
}
space and $\mathcal{O}(\vertex^{3})$ time.
Such an approach would obviously be impractical for large graphs,
which are often the case in several applications.
For that reason, several more sophisticated approaches avoid storing
the whole transitive closure of the graph.
The majority of the existing reachability computation
approaches belong to three main categories.

The first category~\cite{Sanders2005,TriBl2007,Yildirim2010} includes
on-line searches.
Instead of materializing the transitive closure, these methods use
auxiliary labelling information for each vertex.
This information is pre-computed and used for pruning the search space.
For example, in GRAIL~\cite{Yildirim2010} each vertex is assigned
multiple interval labels where each interval is computed by a random
depth-first traversal.
The interval can help determine whether a vertex in the search space
can be immediately pruned because it never reaches the destination
vertex.
The pre-computation of the auxiliary labelling information in these
approaches is generally quite light, and the final index size is quite
compact.
Thus, these approaches can be applicable to very large graphs.
However, the query performance is not appealing, and those methods
can be easily one or two orders of magnitude slower than the ones
belonging to the other two categories.

The second category~\cite{Cohen2002,Jin2013} includes reachability
oracles, more commonly known as hop labelling.
Each vertex $v$ is labelled with two sets:
$L_{out(v)}$, which contains hops (vertices) $v$ can reach,
and
$L_{in(v)}$, which contains hops that can reach $v$.
Given those two sets for each vertex $v$, and nothing else, it is
possible to compute whether $u$ reaches $v$ by determining whether
there is at least a common hop,
$L_{out(u)} \cap L_{in(v)} = \emptyset$.
Those methods lie in between the first category (on-line searches) and
the third one (transitive closure materialization), as hop labelling
can be considered as a factorization of the binary matrix of the
transitive closure.
Thus, it should be able to deliver more compact indices than the
transitive closure and also offer fast query performance.

The third category~\cite{Jagadish1990,dualLabelling2006,Jin2008,%
ChenChen2008,ChenChen2011,vanSchaik2011}
includes transitive closure compression approaches.
This family of approaches aims to compress the transitive closure, i.e.,
to store for each vertex $v$ a compact representation of $TC(v)$,
\textit{e.g.}, all the vertices reachable from $v$.
The reachability from vertex $v$ to $u$ is computed by checking vertex
$u$ against $TC(v)$.
Representative approaches include chain compression, interval or tree
compression, dual-labelling, path-tree, and bit-vector compression.
Existing studies show how these approaches are the fastest in terms of
query answering since checking against transitive closure $TC(v)$ is
typically quite simple (linear scan or binary search suffices).
However, the transitive closure materialization, despite compression,
is still expensive, and the index size is often the reason these
approaches are not scalable on large graphs.
Moreover, pre-computation costs can be quite ineffective for certain
applications.  The amount of additional information associated to each node
  determines a trade-off between additional memory requirements and
  efficiency in answering reachability queries.

Jagadish~\cite{Jagadish1990} proposed an efficient data structure for
answering reachability queries on DAGs, called {\em reachability index}.
This data structure relies on a decomposition of the graph into $\chain$
disjoint chains and occupies $\Theta(\chain \vertex)$ space.
In~\cite{Jagadish1990}, the minimum number \width~of chains is proved to
be equal to the width of the graph, \textit{e.g.}, the maximum number of nodes
that are mutually unreachable.
Jagadish proposed an algorithm to decompose the graph into the optimal
number of chains that involves solving a min-flow problem.
This algorithm runs in $\mathcal{O}(\vertex^3)$ time.
The author also proposes some heuristics to compute disjoint chain
covers with a sub-optimal number of chains.
Likewise,~\cite{Felsner2003} suggest a path cover technique to solve
the lowest common ancestor problem, and~\cite{Kowaluk2008} develop a
recognition algorithm for orders of small width and graphs of small
Dilworth number based on similar ideas.

Chen et al.~\cite{ChenChen2008} improved on this result by proposing
an algorithm able to decompose a graph into the optimal number of
chains that requires only
$\mathcal{O}(\vertex^2 + \width \vertex \sqrt{\width})$ time.
Once the graph has been decomposed, labelling can be done in
$\mathcal{O}(\width \edge)$
time, as described in~\cite{ChenChen2008}.

Chen et al.~\cite{ChenChen2011} propose a technique to answer
reachability queries that relies partially on a tree decomposition of
the graph and partially on a chain decomposition constructed as shown
in~\cite{ChenChen2008}.

Unfortunately, after more than two decades since first proposals,
and a long list of worthy attempts, generally used techniques fail to
meet general expectations or a exceedingly complex.
In this paper, we follow the work
by~\cite{Jagadish1990,ChenChen2008,ChenChen2011,Teuhola1996}, and we
design an algorithm adopting a simple and fast strategy which greedily
decomposes the graph into disjoint chains.
This is done by selecting the chain that includes the maximum number
of nodes at each iteration.
The technique achieves a good trade-off between space and time
complexities, and this benefits are particularly useful when dealing
with large graphs or in specific context, like in logic synthesis and
formal verification, where only a limited amount of time can be
dedicated to the pre-processing phase.
The generated index can answer reachability queries as low as in
constant time, depending on the way the labels associated to each node
can be examined, thus satisfying time constraints of time-critical
applications.

We demonstrate that our algorithm is able to construct a compact
reachability index using
$\mathcal{O}(\chain \vertex)$
time to decompose the graph into \chain\ chains, given a topological
ordering of the nodes.
It also requires
$\mathcal{O}(\chain \vertex)$ space, with $\chain$ bounded by
$[\width \log(\frac{\vertex}{\width})]$, being \width\ the minimum
number of chains.
Even though our approach leads to a non-minimum number of chains, our
experiments suggest that its results are comparable to the optimum in
terms of space, while significantly reducing the computation time.
Furthermore, in addition to provide a different time/memory trade-off
with respect to methods such
as~\cite{Jagadish1990,ChenChen2008,ChenChen2011},
our technique is rather straightforward to implement.


\subsection{Related Works and Comparison}
\label{sec:related}

The remainder of this paper is organized as follows.
Section~\ref{sec:index} illustrates the reachability index data
structure proposed in~\cite{Jagadish1990}.
The algorithm we propose is described in Section~\ref{sec:descr}.
In Section~\ref{sec:analysis} we prove time and space bounds of the
proposed method.
Section~\ref{sec:exp} presents an experimental evaluation on the
algorithm over several categories of automatically generated
benchmarks as well as on graphs arising from the field of logic
synthesis and formal verification.
Finally, in Section~\ref{sec:conclusions} we provide some summarizing
remarks about the work.





\section{DAG Reachability Index}
\label{sec:index}


\subsection{Defining a Reachability Index}
\label{sec:indexDef}

Let $D = (V, E)$ be a DAG\footnote{
  For a cyclic graph $G=(V,E)$, i.e., a graph containing cycles, it is
  possible to find all strongly connected components (SCCs) in linear
  time~\cite{Tarjan1972-SCCs}.
  Those components can be collapsed into representative vertices, such
  that all nodes in an SCC are equivalent to their representative as
  far as reachability is concerned.
}, we add to $V$ an artificial super-source
$s$ connected to all original sources, and an artificial super-sink
$t$ connected to all original sinks.
Note that the mutual reachability of any pair of nodes in the graph is
not affected by the introduction of such nodes. Nodes are arbitrarily
numbered from $1$ to $n$.

In~\cite{Jagadish1990}, nodes of the DAG are subdivided into a set
$\Pi$ of disjoint reachability chains from $s$ to $t$.
A reachability chain $\pi$ is defined as a sequence of nodes
$(v_1, v_2, \dots, v_k)$
such that, for every node $v_i \in \pi$, there exists a directed path
in the graph from it to any of its subsequent nodes
$v_j, \; i < j \leq k$.
Each reachability chain
$\pi \in \Pi$
is identified by a positive integer index starting from one.
In order for such chains to be disjoint, each node of the graph must
appear in exactly one of them, except for the artificial super-source
$s$ and super-sink $t$ which appear in all of them.
Each non-artificial node can thus be uniquely identified through a
couple of indices $(i, j)$ where $i$ identifies the chain and $j$
states the position of the node in such a chain.

The transitive closure of the DAG is materialized by labelling each of
its nodes, except for $s$ and $t$, with reachability information
towards the computed chains.
In particular, each node $v$ is labelled with \chain\ couples of indices
$(i, j)$: \chain\ is the number of chains, and each label $(i, j)$ states
that the highest node reachable from $v$ in chain $\pi_i$ is in position
$j$.
Note that for $i$ equal to the chain of $v$, $j$ corresponds to the
direct successor of $v$ in that chain.
This way chains are implicitly stored as labels.
We consider the labels associated to a node $v$ as organized into an
array $\lambda_{v}$ of length \chain.
The $i$-th element of such an array corresponds to the position $j$ of
the highest node reachable from $v$ in $\pi_i$.
We consider the whole set of labels as organized into an array
$\Lambda$ indexed by positions in some pre-determined nodes ordering.
The $k$-th element of $\Lambda$ is $\lambda_v$, where $v$ is the
$k$-th node in such ordering.

The couple $I=(\Pi,\Lambda)$ forms a DAG reachability index.
Storing such an index requires $\mathcal{O}(\chain \vertex)$ space.
Since the number of considered chains is typically much lower than the
number of nodes, the space required to store $I$ is consequently much
lower than the one required to explicitly store the transitive closure
of the graph.

\begin{Example}
In Figure~\ref{indexexample}, an example of a DAG reachability index
is provided.
Figure~\ref{indexexample}.a shows a simple DAG $D$.
A possible reachability index for the graph, with $3$ disjoint
reachability chains
$(\pi_1, \pi_2, \pi_3)$,
is presented in Figure~\ref{indexexample}.b.
Next to each node $v$, the couple of indices on top represents the
node identifier (\textit{e.g.}, $b$ is identified by $(2, 1)$, meaning
that, in the second chain, $b$ is the first node).
The array of labels beneath it tracks the highest node reachable from
$v$ in every chain (\textit{e.g.}, the highest node reachable from $b$
in the first chain is $g$, identified by the label $(1, 3)$).

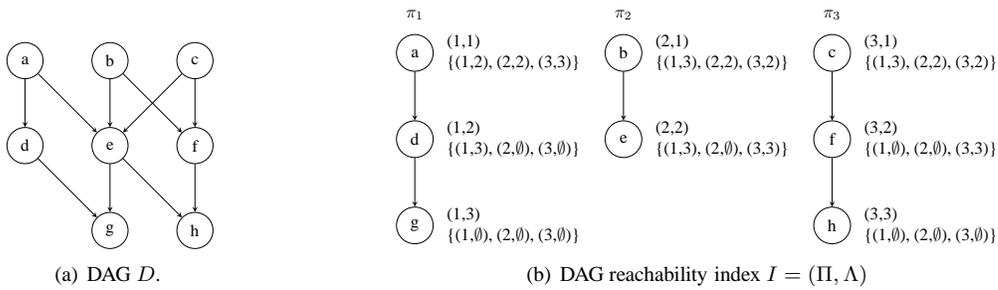
\begin{figure}[!htb]
\centering
\subfigure[DAG $D$.]
{
\resizebox{.19\linewidth}{!}{
\begin{tikzpicture}[circ/.style={circle, draw, minimum size=.75cm}, label/.style={align=left}]

\node (a) [circ] {a} ;
\node (b) [right =1cm of a, circ] {b} ;
\node (c) [right =1cm of b, circ] {c} ;
\node (d) [below =of a, circ] {d} ;
\node (e) [below =of b, circ] {e} ;
\node (f) [below =of c, circ] {f} ;
\node (g) [below =of e, circ] {g} ;
\node (h) [below =of f, circ] {h} ;

\draw[-stealth] (a) -- (d) ;
\draw[-stealth] (a) -- (e) ;
\draw[-stealth] (d) -- (g) ;
\draw[-stealth] (b) -- (e) ;
\draw[-stealth] (e) -- (g) ;
\draw[-stealth] (e) -- (h) ;
\draw[-stealth] (b) -- (f) ;
\draw[-stealth] (c) -- (e) ;
\draw[-stealth] (c) -- (f) ;
\draw[-stealth] (f) -- (h) ;

\end{tikzpicture}}
}
\quad\quad\quad\quad\quad\quad
\subfigure[DAG reachability index $I = (\Pi, \Lambda)$]
{
\resizebox{.55\linewidth}{!}{
\begin{tikzpicture}[circ/.style={circle, draw, minimum size=.75cm}, label/.style={align=left}]

\node (a) [circ] {a} ;
\node (b) [right =3.5cm of a, circ] {b} ;
\node (c) [right =3.5cm of b, circ] {c} ;
\node (d) [below =of a, circ] {d} ;
\node (e) [below =of b, circ] {e} ;
\node (f) [below =of c, circ] {f} ;
\node (g) [below =of d, circ] {g} ;
\node (h) [below =of f, circ] {h} ;

\node (la) [right= .15cm of a, label] {(1,1) \\ \{(1,2), (2,2), (3,3)\}} ;
\node (lb) [right= .15cm of b, label] {(2,1) \\ \{(1,3), (2,2), (3,2)\}} ;
\node (lc) [right= .15cm of c, label] {(3,1) \\ \{(1,3), (2,2), (3,2)\}} ;
\node (ld) [right= .15cm of d, label] {(1,2) \\ \{(1,3), (2,$\emptyset$), (3,$\emptyset$)\}} ;
\node (le) [right= .15cm of e, label] {(2,2) \\ \{(1,3), (2,$\emptyset$), (3,3)\}} ;
\node (lf) [right= .15cm of f, label] {(3,2) \\ \{(1,$\emptyset$), (2,$\emptyset$), (3,3)\}} ;
\node (lg) [right= .15cm of g, label] {(1,3) \\ \{(1,$\emptyset$), (2,$\emptyset$), (3,$\emptyset$)\}} ;
\node (lh) [right= .15cm of h, label] {(3,3) \\ \{(1,$\emptyset$), (2,$\emptyset$), (3,$\emptyset$)\}} ;

\node (p1) [above=.175cm of a] {$\pi_1$} ;
\node (p2) [above=.175cm of b] {$\pi_2$} ;
\node (p3) [above=.175cm of c] {$\pi_3$} ;

\draw[-stealth] (a) -- (d) ;
\draw[-stealth] (d) -- (g) ;
\draw[-stealth] (b) -- (e) ;
\draw[-stealth] (c) -- (f) ;
\draw[-stealth] (f) -- (h) ;

\end{tikzpicture}}
}
\caption{An example of DAG reachability index.}
\label{indexexample}
\end{figure}
\end{Example}


\subsection{Building a Reachability Index}
\label{sec:indexBuild}

In the general case, nodes of a DAG $D = (V, E)$ can be subdivided
into a variable number \chain\ of disjoint reachability chains.
An obvious upper bound for such number is \vertex, i.e., the
cardinality of the set of nodes $|V|$.
In~\cite{Jagadish1990} it is demonstrated that the width of the graph
\width\ is a tight lower bound for \chain.
As mentioned earlier, different approaches are available to compute a
set of disjoint chains $\Pi$ of either optimal or sub-optimal size.

Once $\Pi$ has been computed, nodes must be labelled.
The procedure presented in~\cite{Jagadish1990} runs in
$\mathcal{O}(\edge)$, and requires $\mathcal{O}(\chain \vertex)$
space.
The labelling procedure is independent from the number
of considered chains, and the way those chains were computed.
In order to produce such a labelling, it is required to iterate over
all outgoing edges of a node, updating an entry whenever a
reachable vertex is found in a higher position with respect to the
current state of a label, for a given path.
Overall, this step requires to check all the edges of the original
graph exactly once.

The main drawbacks of such algorithms are either the computational
effort required to find the optimum or the size of the data structures
used to store such a representation.
Furthermore, depending on the chosen algorithm, query complexity may
vary sensibly.
Last, but not least, these kind of algorithms may require a
significant effort in terms of implementation and optimization.


\subsection{Querying DAG Using a Reachability Index}
\label{sec:indexDef}

Given a DAG reachability index
$I=(\Pi,\Lambda)$,
it is possible to answer a reachability query between two nodes $v$,
$u$ in constant time.
The related procedure is described by Algorithm~\ref{alg:reachq}.

For sake of simplicity, the predicates \textsc{Chain} and
\textsc{Position} are used to retrieve some information about a node.
\textsc{Chain} returns the index of the chain a vertex belongs to,
and \textsc{Position} the position of the vertex within that chain.
If $v$ and $u$ belong to the same chain, reachability from $v$ to $u$
can be decided by checking their relative position within that
chain (line $2$).
Otherwise, if $v$ and $u$ belong to different chains, their
reachability can be determined by comparing the position of the
highest node reachable from $v$ in the chain of $u$, to the position
of $u$ itself (line $4$).

\begin{algorithm}[!htb]
\fbox{
\begin{minipage}{.97\textwidth}
\begin{spacing}{1.5}
\begin{algorithmic}[1]
\REQUIRE $I=(\Pi,\Lambda)$, $v,u \in V$
\ENSURE $\top$ or $\bot$
\IF{$\textsc{Chain}(v)=\textsc{Chain}(u)$\vspace*{-1.9mm}}
\STATE $answer\leftarrow \textsc{Position}(v)>\textsc{Position}(u)$
\ELSE \vspace*{-1.9mm}
\STATE $answer\leftarrow \textsc{Position}(\lambda_v[\textsc{Chain}(u)])>\textsc{Position}(u)$
\ENDIF
\RETURN $answer$
\end{algorithmic}
\end{spacing}
\vspace*{-1mm}
\end{minipage}
}
\caption{\textsc{ReachabilityQuery}}
\label{alg:reachq}
\end{algorithm}



\section{Algorithm Description}
\label{sec:descr}


Given a DAG $D$, and a topological order of its nodes $\Gamma$, our
algorithm produces a DAG reachability index in a simple, fast, and
efficient way.
The number of chains found is not the minimum one, but we will show that,
theoretically speaking, its upper bound is quite close to the exact
value, and, practically speaking, its error is small in many
real cases. 

To generate the reachability index, we proceed in two steps, as shown
by Algorithm~\ref{alg:highlev}.
First, we compute a disjoint chain cover
$\Pi = (\pi_1, \pi_2, ..., \pi_k)$
of $D$ (line $1$).
Then, we convert $\Pi$ into a DAG reachability index by adding the
appropriate reachability labels to each node $v \in V$ (line $2$).

\begin{algorithm}[!htb]
\fbox{
\begin{minipage}{.97\textwidth}
\begin{spacing}{1.5}
\begin{algorithmic}[1]
\REQUIRE $D=(V,E)$, $s,t \in V$, $\Gamma$
\ENSURE $\Pi, \Lambda$
\STATE $\Pi \leftarrow$ \textsc{GreedyChainCover}$(D, s, t, \Gamma)$
\STATE $\Lambda \;\leftarrow$ \textsc{NodesLabelling}$(D, t, \Pi, \Gamma)$
\RETURN $\Pi, \Lambda$
\end{algorithmic}
\end{spacing}
\vspace*{-1mm}
\end{minipage}
}
\caption{\textsc{DagReachabilityIndex}}
\label{alg:highlev}
\end{algorithm}

Before describing the algorithm in further details, we introduce the
following definitions.

\begin{Definition}
Let $D = (V, E)$ be a DAG, a chain $\pi$ is either a single node $v
\in V$ or a sequence of distinct nodes $(v_1, v_2, \dots, v_k)$,
where $v_{i+1}$ is reachable from $v_{i}$ in $D$ for each $1 \leq i \leq k-1$.
\end{Definition}

\begin{Definition}
Let $D = (V, E)$ be a DAG and let $\Pi = (\pi_1, \pi_2, ..., \pi_k)$ be a set of chains from
$s$ to $t$. A node $v \in V$ is said to be covered by $\Pi$ iff
$\exists \pi \in \Pi : v \in \pi$. Otherwise $v \in V$ is said
to be non-covered by $\Pi$.
\end{Definition}

\begin{Definition}
Let $D = (V, E)$ be a DAG and let $\Pi = (\pi_1, \pi_2, ..., \pi_k)$ be a set of chains from $s$ to
$t$. $\Pi$ is a chain cover of $D$ iff every node $v \in V$
is covered by $\Pi$. $\Pi$ is a disjoint chain cover of $D$ iff it is a chain
cover and each node belongs to exactly one $\pi \in \Pi$.
\end{Definition}

\begin{Definition}
Let $D = (V, E)$ be a DAG  and let $\Pi = (\pi_1, \pi_2, ..., \pi_k)$ be a set of chains from $s$ to
$t$. The non-covered distance metric under $\Pi$ is defined as the function $d: V
\rightarrow \mathbb{N}$ that associates to each node $v \in V$ the
maximum number of nodes non-covered by $\Pi$ in any path from $s$ to $v$.
\end{Definition}

The following two sub-sections describe procedures
\textsc{GreedyChainCover} and \textsc{NodesLabelling}.


\subsection{Greedy Chain Cover}
\label{sec:gcc}

Given $\Pi$ and $D$, in order to compute the non-covered distance of each $v \in V$ under
$\Pi$, we define a weight function $w: V \times V \rightarrow \{0, 1\}$ that assigns
a Boolean value to each edge $(v, u) \in E$ as follows:

\[
w(v, u) = \left\{
  \begin{array}{l l}
  0 & \quad\quad \text{if} \;  u \; \text{is covered by} \; \Pi\\
  1 & \quad\quad \text{otherwise}
  \end{array} \right.
\]

Each edge is initialized with unitary weight.
For each node $v \in V$, its non-covered distance under $\Pi$ can be
computed as the length of the longest path from $s$ to $v$ in the
graph $D$ weighted by $w$.
Since $D$ is a DAG, finding the longest path between two of its nodes
can be done in $\mathcal{O}(\vertex)$ time by taking into account the
topological ordering of nodes.
Computing such an ordering requires
$\mathcal{O}(\vertex + \edge)$~\cite{Lawler76}.

\begin{algorithm}[!htb]
\fbox{
\begin{minipage}{.97\textwidth}
\begin{spacing}{1.5}
\begin{algorithmic}[1]
\REQUIRE $D=(V,E)$, $s \in V$, $t \in V$, $\Gamma$
\ENSURE $\Pi$
\STATE $\Pi \leftarrow \emptyset$
\WHILE{$\exists v \in V$ non-covered} \vspace*{-1.9mm}
\STATE $\forall v \in V$ compute non-covered distance $d(v)$ following $\Gamma$
\STATE $c \leftarrow t$
\STATE $\pi \leftarrow \{t\}$
\WHILE{$s \notin \pi$} \vspace*{-1.9mm}
\STATE $p \leftarrow v \in V$ : $(v, c) \in E \wedge d(v)$ is max
\IF{$p$ is non-covered \textbf{or} $p = s$} \vspace*{-1.9mm}
\STATE \textsc{UpdateNodeInfo}$(p, \pi)$
\STATE $\pi \leftarrow \pi \cup p$
\ENDIF
\STATE $c \leftarrow p$
\ENDWHILE
\STATE $\Pi \leftarrow \Pi \cup \pi$
\ENDWHILE
\RETURN $\Pi$
\end{algorithmic}
\end{spacing}
\vspace*{-1mm}
\end{minipage}
}
\caption{\textsc{GreedyChainCover}}
\label{alg:pathcover}
\end{algorithm}

Algorithm~\ref{alg:pathcover} shows our procedure to compute a
disjoint chain cover $\Pi$ of the graph with a sub-optimal number of
chains.
$\Pi$ is constructed incrementally, by adding a new chain to a partial
cover at every iteration of the algorithm (line $12$).
Each chain is constructed backwards (lines $4$-$11$), greedily selecting
nodes based on their non-covered distance under the current $\Pi$.
First the non-covered distance for each node is computed (line $3$).
Then, starting from $t$ (lines $4$-$5$), the algorithm iteratively extends
the current chain $\pi$ with a predecessor $p$ of its last inserted
node.
This predecessor is the one whose non-covered distance $d(p)$ under
$\Pi$ is maximum (line $7$).
In order to produce a disjoint chain cover, upon selecting the next
node $p$, we check whether it is already covered by $\Pi$ or not
(line $8$).
If $p$ is not-covered, then we add it the current chain $\pi$
(line $9$) and the procedure continues.
Upon introducing a node in a chain $\pi$, its local information is
updated (line $10$).
Every node tracks the chain it belongs to, its position in $\pi$ (with
$s$ starting at zero) and its successor, if any, within $\pi$.
Otherwise such a node is not actively included in the chain $\pi$, but
it is chosen nonetheless as the next step for chain construction
(line $11$).
When $s$ is reached, $\pi$ is a chain from $s$ to $t$ with the highest
number of nodes non-covered by the current $\Pi$.
Note that the artificial nodes $s$ and $t$ are exceptions to the
disjoint character of chains as they appear in each of them.
The chain $\pi$ is then added to the current partial cover $\Pi$.
Next, the non-covered distance of every node is updated to match the
current partial chain cover and the algorithm moves to the next
iteration.
Iterations proceed until every node has been covered by $\Pi$.

\begin{Example}
Figure~\ref{figureexample} provides an example in which our greedy
approach is applied to a simple DAG.

\begin{figure}
\centering
\resizebox{.42\linewidth}{!}{
\subfigure[Graph $D$ with distances, and longest path $\pi_1$]
{
\includegraphics[scale=1]{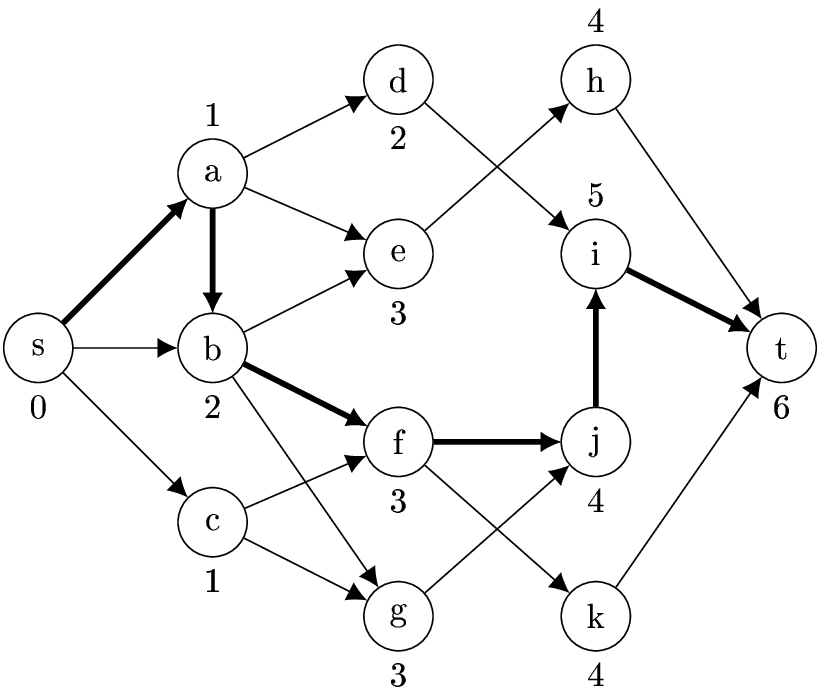}
}}
\quad\quad\quad\quad\quad\quad
\resizebox{.42\linewidth}{!}{
\subfigure[Graph $D$ at step 2 and longest path $\pi_2$]
{
\includegraphics[scale=1]{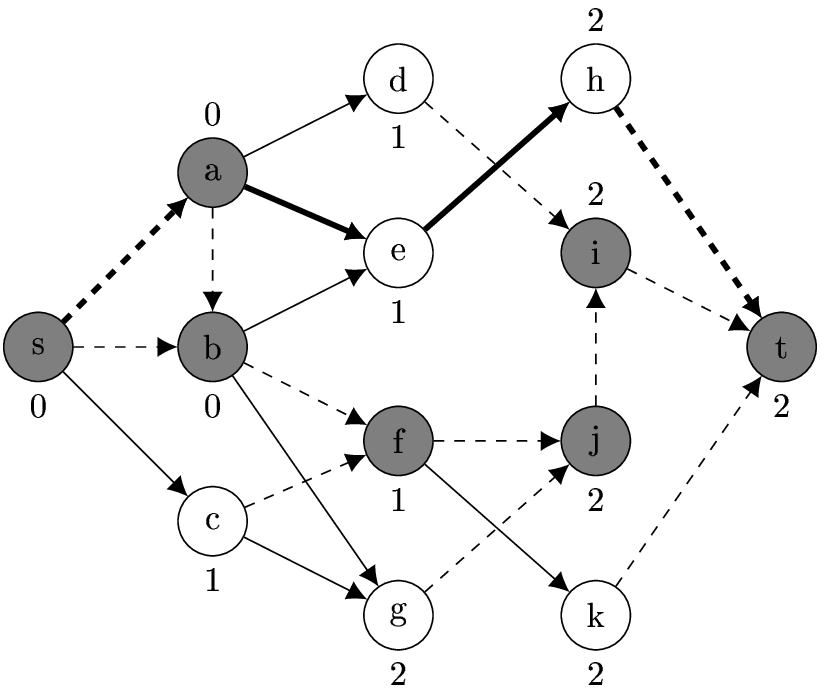}
}}
\resizebox{.42\linewidth}{!}{
\subfigure[Graph $D$ at step 3 and longest path $\pi_3$]
{
\includegraphics[scale=1]{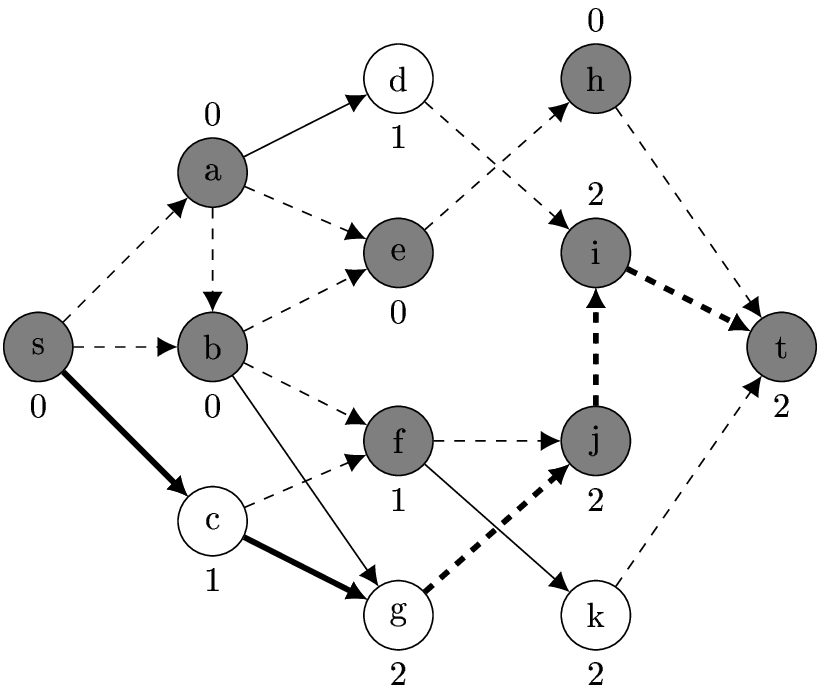}
}}
\quad\quad\quad\quad\quad\quad
\resizebox{.42\linewidth}{!}{
\subfigure[Graph $D$ at step 4 and longest path $\pi_4$]
{
\includegraphics[scale=1]{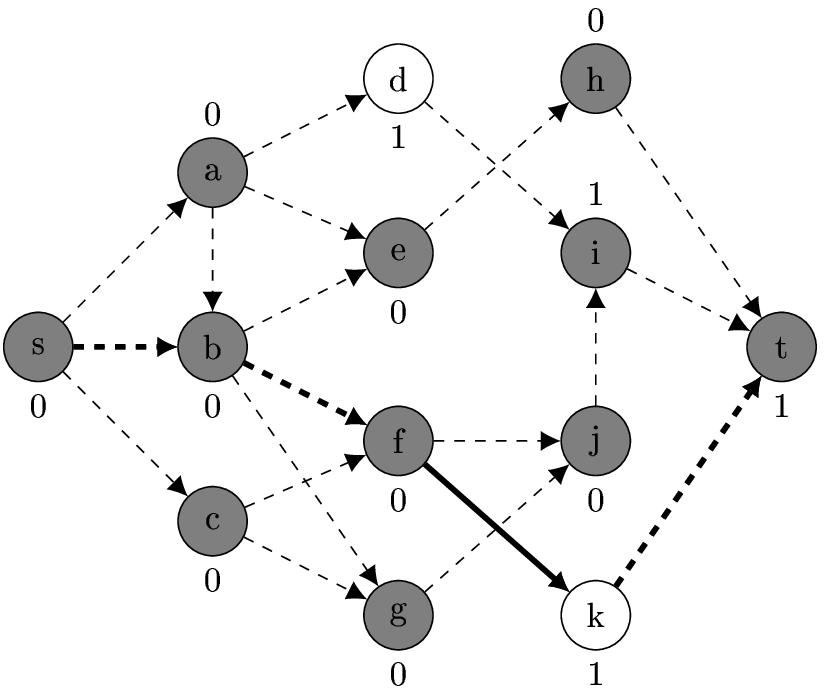}
}}
\resizebox{.42\linewidth}{!}{
\subfigure[Graph $D$ at step 5 and longest path $\pi_5$]
{
\includegraphics[scale=1]{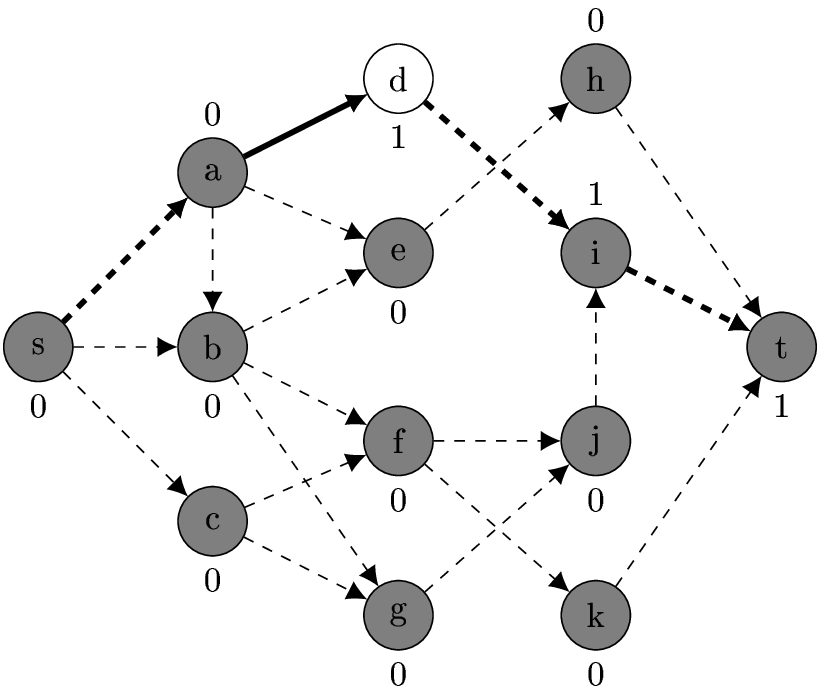}
}}
\quad
\resizebox{.54\linewidth}{!}{
\subfigure[Final set $\Pi$ of chains.]
{
\includegraphics[scale=1]{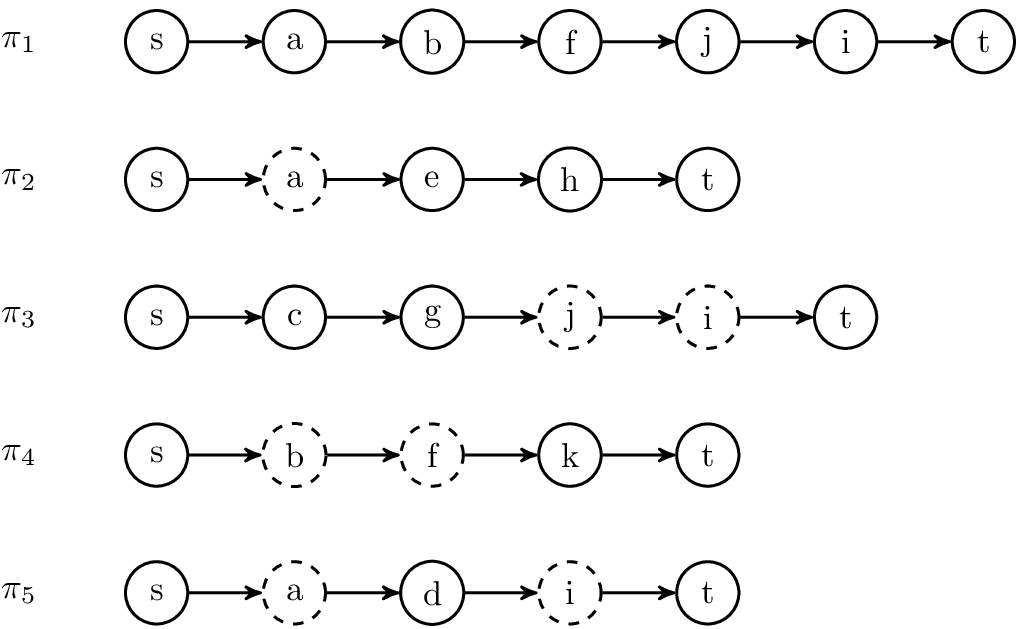}
}}
\caption{The algorithm running on a simple DAG}
\label{figureexample}
\end{figure}
At each iteration $i$, non-covered distances are updated and the chain
$\pi_i$ with longest non-covered distance from $s$ to $t$ is computed.
Edges with weight equal to zero are dashed.
The chain constructed is represented with thicker edges.
All newly covered nodes are shaded in the next sub-figure and all edges
entering such nodes are given a weight equal to zero.
Iterations continue until all nodes are covered by some chain.

Notice that the chain cover $\Pi$ constructed by our algorithm
consists of $5$ paths
$(\pi_1, \pi_2, \pi_3, \pi_4, \pi_5)$
reported in Figure~\ref{figureexample}.f, while an optimal
one would require only $4$ chains, \textit{e.g.},:
$$
\begin{array}{ll}
\pi_1: & s \rightarrow b \rightarrow f \rightarrow k \rightarrow t \\
\pi_2: & s \rightarrow c \rightarrow g \rightarrow j \rightarrow i \rightarrow t \\
\pi_3: & s \rightarrow b \rightarrow e \rightarrow h \rightarrow t \\
\pi_4: & s\rightarrow a\rightarrow d\rightarrow i\rightarrow t
\end{array}
$$
\end{Example}


\subsection{Nodes Labelling}

Algorithm~\ref{alg:vlabel} shows the labelling procedure.
In order to add cross-chain reachability information to each node we
need to examine all of its outgoing edges.
A single scan of the edges is able to address, in general, only direct
reachability.
In order to take into account also transitive reachability, we process
nodes following backward the given topological order $\Gamma$
(i.e., we adopt a reverse topological order).
In this way we ensure that when processing a node, each of its
descendants have been already labelled.
We assume that for each node $v$, its label array $\lambda_{v}$ is
initialized with \chain\ undefined labels.
Undefined labels are denoted with the symbol $\emptyset$.
We proceed by scanning each node in reverse topological order
(line $1$).
Each non-artificial node is initialized with the set of non-undefined
labels (if any) of its direct successor in its chain (line $5$), while
$s$ and $t$ are deliberately skipped (line $2$-$3$).
$\lambda_{\textsc{Next}(v)}$ is an acceptable set of labels for
$\lambda_v$, because each node reachable from $\textsc{Next}(v)$
is also reachable from $v$.

\begin{algorithm}[!htb]
\fbox{
\begin{minipage}{.97\textwidth}
\begin{spacing}{1.5}
\begin{algorithmic}[1]
\REQUIRE $D=(V,E)$, $s \in V$, $t \in V$, $\Pi$, $\Gamma$
\ENSURE $\Lambda$
\FORALL{$v \in V$ following $\Gamma$ backwards} \vspace*{-1.9mm}
\IF{$v=t$ \textbf{or} $v=s$} \vspace*{-1.9mm}
\STATE \textbf{continue}
\ENDIF
\STATE $i \leftarrow$ \textsc{Chain}$(v)$
\STATE $u \leftarrow$ \textsc{Next}$(v)$
\STATE $\lambda_{v} \leftarrow \lambda_{u}$
\STATE $\lambda_{v}[i] \leftarrow$ \textsc{Position}$(u)$
\FORALL{$w \in V : (v,w) \in E$} \vspace*{-1.9mm}
\IF{$w=t$}\vspace*{-1.9mm}
\STATE \textbf{continue}
\ENDIF
\STATE $j \leftarrow$ \textsc{Chain}$(w)$
\IF{$i \neq j$} \vspace*{-1.9mm}
\IF{$\lambda_{v}[j] = \emptyset$ {\bf or} $\textsc{Position}(w) < p$} \vspace*{-1.9mm}
\STATE $\lambda_{v}[j] \leftarrow$ \textsc{Position}$(w)$
\ENDIF
\ENDIF
\ENDFOR
\STATE $\Lambda[v] \leftarrow \lambda_{v}$
\ENDFOR
\RETURN $\Lambda$
\end{algorithmic}
\end{spacing}
\vspace*{-1mm}
\end{minipage}
}
\caption{\textsc{NodesLabelling}}
\label{alg:vlabel}
\end{algorithm}

\begin{Example}
Taking into account the graph in Figure~\ref{figureexample}, the
labelling procedure proceeds as follows. 
Nodes are processed in reverse topological order: 
$$
\begin{array}{*{13}{c}}
t & i & j & g & k & f & c & h & e & b & d & a & s
\end{array}
$$
First of all, super-sink $t$ is processed and skipped.
Then, node $i$ is processed.
Since $i$ is the non-artificial endpoint for chain $\pi_1$ no
labelling information is inherited from its successor in $\pi_1$.
Since $i$ can reach only $t$ through a direct edge, no further
labelling information is added to $\lambda_i$.
The procedure continues on node $j$.
$\lambda_j$ is initialized with $\lambda_i$ and the label for the
direct successor $i$ of $j$ in chain $\pi_1$ is set.
Since only the edge $(j, i)$ exists in the graph and both $i$ and $j$
belong to the same chain, no further labels are updated.
Node $g$ is examined next.
After the initialization of $\lambda_g$, the edge $(g, j)$ is
processed.
Since the nodes $g$ and $j$ lie on different chains, and $\lambda_g$
has no entry towards the first chain (where $j$ lies in position 4),
the entry $(1, \emptyset)$ is updated and becomes $(1, 4)$.
Node $k$ is examined next, and the algorithm behaves exactly as seen
for node $i$.
Node $f$ is examined next, and the algorithms perform the same set of
steps as seen for node $j$.

Procedure \textsc{NodesLabelling} then proceeds in a similar way for
the remaining nodes, finally producing the following labelling:
$$
\footnotesize
\begin{array}{lll}
{\bf a}: & (1, 1) & \{(1,2), (2,1), (3,2), (4,1), (5,1)\} \\
{\bf b}: & (1, 2) & \{(1,3), (2,1), (3,2), (4,1), (5,\emptyset)\} \\
{\bf c}: & (3, 1) & \{(1,3), (2,\emptyset), (3,2), (4,\emptyset), (5,\emptyset)\} \\
{\bf d}: & (5, 1) & \{(1,5), (2,\emptyset), (3,\emptyset), (4,\emptyset), (5,\emptyset)\} \\
{\bf e}: & (2, 1) & \{(1,\emptyset), (2,2), (3,\emptyset), (4,\emptyset), (5,\emptyset)\} \\
{\bf f}: & (1, 3) & \{(1,4), (2,\emptyset), (3,\emptyset), (4,1), (5,\emptyset)\} \\
{\bf g}: & (3, 2) & \{(1,4), (2,\emptyset), (3,\emptyset), (4,\emptyset), (5,\emptyset)\} \\
{\bf h}: & (2, 2) & \{(1,\emptyset), (2,\emptyset), (3,\emptyset), (4,\emptyset), (5,\emptyset)\} \\
{\bf i}: & (1, 5) & \{(1,\emptyset), (2,\emptyset), (3,\emptyset), (4,\emptyset), (5,\emptyset)\} \\
{\bf j}: & (1, 4) & \{(1,5), (2,\emptyset), (3,\emptyset), (4,\emptyset), (5,\emptyset)\} \\
{\bf k}: & (4, 1) & \{(1,\emptyset), (2,\emptyset), (3,\emptyset), (4,\emptyset), (5,\emptyset)\} 
\end{array}
$$
\end{Example}

\begin{Example}
The graph in Figure~\ref{figureexample} never leads the algorithms to
perform the step at line $14$ as a consequence of the second half of the
$if$ condition.
To illustrate a label update in case of an improvement in terms of
potential reachability, let us introduce another partial example,
illustrated in Figure~\ref{figureexample2}.

\begin{figure}[!htb]
\centering
\subfigure[Labelling before an update operation concerning node $a$.]
{
\resizebox{.55\linewidth}{!}{
\begin{tikzpicture}[circ/.style={circle, draw, minimum size=.75cm}, label/.style={align=left}]

\node (a) [circ] {a} ;
\node (b) [right =3.5cm of a, circ] {b} ;
\node (c) [right =3.5cm of b, circ] {c} ;
\node (d) [below =of a, circ] {d} ;
\node (e) [below =of b, circ] {e} ;
\node (f) [below =of c, circ] {f} ;

\node (la) [right= .15cm of a, label] {(1,1) \\ \{(1,2), (2,2), (3,$\emptyset$)\}} ;
\node (lb) [right= .15cm of b, label] {(2,1) \\ \{(1,$\emptyset$), (2,2), (3,$\emptyset$)\}} ;
\node (lc) [right= .15cm of c, label] {(3,1) \\ \{(1,$\emptyset$), (2,1), (3,2)\}} ;
\node (ld) [right= .15cm of d, label] {(1,2) \\ \{(1,$\emptyset$), (2,2), (3,$\emptyset$)\}} ;
\node (le) [right= .15cm of e, label] {(2,2) \\ \{(1,$\emptyset$), (2,$\emptyset$), (3,$\emptyset$)\}} ;
\node (lf) [right= .15cm of f, label] {(3,2) \\ \{(1,$\emptyset$), (2,$\emptyset$), (3,$\emptyset$)\}} ;

\node (p1) [above=.175cm of a] {$\pi_1$} ;
\node (p2) [above=.175cm of b] {$\pi_2$} ;
\node (p3) [above=.175cm of c] {$\pi_3$} ;

\draw[-stealth] (a) -- (d) ;
\draw[-stealth] (b) -- (e) ;
\draw[-stealth] (c) -- (f) ;

\end{tikzpicture}}
}
\quad\quad
\subfigure[Labelling after an update operation concerning node $a$.]
{
\resizebox{.55\linewidth}{!}{
\begin{tikzpicture}[circ/.style={circle, draw, minimum size=.75cm}, label/.style={align=left}]

\node (a) [circ] {a} ;
\node (b) [right =3.5cm of a, circ] {b} ;
\node (c) [right =3.5cm of b, circ] {c} ;
\node (d) [below =of a, circ] {d} ;
\node (e) [below =of b, circ] {e} ;
\node (f) [below =of c, circ] {f} ;

\node (la) [right= .15cm of a, label] {(1,1) \\ \{(1,2), (2,1), (3,$\emptyset$)\}} ;
\node (lb) [right= .15cm of b, label] {(2,1) \\ \{(1,$\emptyset$), (2,2), (3,$\emptyset$)\}} ;
\node (lc) [right= .15cm of c, label] {(3,1) \\ \{(1,$\emptyset$), (2,1), (3,2)\}} ;
\node (ld) [right= .15cm of d, label] {(1,2) \\ \{(1,$\emptyset$), (2,2), (3,$\emptyset$)\}} ;
\node (le) [right= .15cm of e, label] {(2,2) \\ \{(1,$\emptyset$), (2,$\emptyset$), (3,$\emptyset$)\}} ;
\node (lf) [right= .15cm of f, label] {(3,2) \\ \{(1,$\emptyset$), (2,$\emptyset$), (3,$\emptyset$)\}} ;

\node (p1) [above=.175cm of a] {$\pi_1$} ;
\node (p2) [above=.175cm of b] {$\pi_2$} ;
\node (p3) [above=.175cm of c] {$\pi_3$} ;

\draw[-stealth] (a) -- (d) ;
\draw[-stealth] (b) -- (e) ;
\draw[-stealth] (c) -- (f) ;

\end{tikzpicture}}
}
\caption{A second labelling example to illustrate solution improvements.}
\label{figureexample2}
\end{figure}
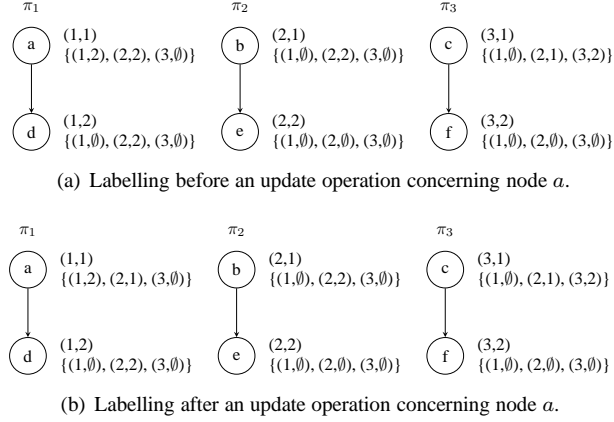

In such an example, we know that every node can reach its successor in
its own chain, and also that $d$ can reach $e$ and therefore $a$ can
reach $e$, as well as that $c$ can reach $b$.
Assume that the edge $(a, c)$ exists in the original graph, the
algorithm, whilst analyzing $a$, is going to recognize that the
highest reachable node in chain $\pi_2$ from $a$ can be updated.
Originally $a$ inherited reachability information from its successor
$d$, but the label $(2,2)$ for $a$ can be improved given the existence
of edge $(a, c)$, thus becoming  $(2, 1)$.
\end{Example}


\section{Algorithm Analysis}
\label{sec:analysis}

As mentioned earlier, the algorithm  \textsc{GreedyChainCover}
proposes a new trade-off between space and time complexities to
compute and store the reachability index structure.
More precisely, it generates a set of chains  $\Pi$ covering all nodes
of the graph in a faster way than any previously known method,
producing however a structure containing more chains, and thus larger
indexes, than these methods.

In the following we analyze both aspects of this new trade-off. First,
in Subsection \ref{approx}, we provide a simple upper bound on the
approximation ratio of the algorithm regarding the number of chains.
Then, an analysis of the complexity of our algorithm is presented in
Subsection \ref{complex}.
We also compare this complexity to those of the algorithms presented
in~\cite{Jagadish1990} and~\cite{ChenChen2008}.


\subsection{Approximation Analysis}
\label{approx}

The algorithm \textsc{GreedyChainCover} constructs a chain cover by
greedily including in the solution the chain that contains the maximal
number of non-covered vertices.
In this sense the problem of finding a chain cover on a DAG $D$ can be
re-conducted to the classical and well studied \textsc{set cover}
problem.
In this interpretation, the elements to cover are the nodes of $D$,
the sets covering the elements are all the chains from source to sink
and the goal is indeed to cover all nodes using a minimum number of
chains.

We provide an upper bound on the number of chains computed by the
proposed algorithm by proving the following result.

\begin{Proposition}
The output of \textsc{GreedyChainCover} is a set of chains of
cardinality at most
$[\width \log(\frac{\vertex}{\width})]$
that covers every node of the DAG.
\end{Proposition}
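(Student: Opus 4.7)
The plan is to recognize \textsc{GreedyChainCover} as an instance of the classical greedy algorithm for \textsc{set cover}, where the ground set is $V\setminus\{s,t\}$ and the sets are the chains from $s$ to $t$ in $D$. By the result of Jagadish~\cite{Jagadish1990}, the minimum number of chains from $s$ to $t$ needed to cover all nodes is exactly the width \width, so the \emph{offline} optimum for this set cover instance is \width. With that identification in place, the bound $[\width\log(\vertex/\width)]$ should follow from the classical analysis of greedy set cover.

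Before invoking that analysis, I would first confirm that the algorithm really does implement the greedy rule, i.e.\ at each iteration it selects a chain from $s$ to $t$ containing the maximum possible number of currently non-covered vertices. Because edge weights under $w$ depend only on the head of the edge, the quantity $d(v)$ computed in line~$3$ is, by construction, the maximum number of non-covered vertices lying on any $s$-to-$v$ path. The backward extension on lines $4$--$11$ then picks, at each step, the predecessor $p$ with largest $d(p)$; since the weight of the traversed edge does not depend on $p$, this backward choice preserves longest-path optimality, so the constructed chain is a longest path in $(D,w)$ from $s$ to $t$, which is precisely the chain maximizing the number of non-covered vertices.

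With that established, the analysis proceeds by the standard doubling/pigeonhole argument. Let $U_i\subseteq V$ be the set of non-covered vertices at the beginning of iteration $i+1$, with $|U_0|\le\vertex$. By Jagadish's theorem there exist \width\ disjoint chains from $s$ to $t$ that cover $V$, and in particular these chains cover $U_i$; by pigeonhole at least one of them contains $\ge |U_i|/\width$ vertices of $U_i$. Since the algorithm picks a chain maximizing the non-covered count, the chain it selects also covers at least $|U_i|/\width$ vertices of $U_i$, whence
\[
|U_{i+1}|\ \le\ |U_i|\!\left(1-\tfrac{1}{\width}\right).
\]
Iterating gives $|U_k|\le \vertex\,(1-1/\width)^k\le \vertex\,e^{-k/\width}$, and taking $k=[\width\log(\vertex/\width)]$ makes the right-hand side small enough that no non-covered vertex remains, bounding the number of chains produced by the algorithm by $[\width\log(\vertex/\width)]$.

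The only nontrivial step is the verification that the backward construction in Algorithm~\ref{alg:pathcover} realizes the greedy choice; the rest is a textbook application of the greedy \textsc{set cover} analysis specialized to the case where the optimum equals \width. I do not anticipate a genuine obstacle, but care is needed in bookkeeping: the artificial nodes $s$ and $t$ must be excluded from the count (they are shared by all chains and so never contribute to $|U_i|$), and the inequality above should be applied to $|U_i|$ measured only over the non-artificial vertices, which is the same quantity the algorithm's termination condition tests against.
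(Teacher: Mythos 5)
Your proposal is correct and takes essentially the same route as the paper: both identify the problem as a \textsc{set cover} instance whose offline optimum equals the width \width\ (by Dilworth/Jagadish) and invoke the greedy set-cover guarantee, the only difference being that you inline the standard pigeonhole analysis ($|U_{i+1}|\le|U_i|(1-1/\width)$) and explicitly verify that the backward longest-path construction realizes the greedy choice, whereas the paper cites the approximation ratio and asserts the equivalence without proof. One minor point, shared with the paper's own statement: the doubling argument actually leaves up to \width\ uncovered vertices after $\width\log(\vertex/\width)$ iterations, so the tight conclusion is $\width\log(\frac{\vertex}{\width})+\width$ chains rather than exactly the stated bound.
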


\begin{proof}
Let $D$ be a DAG with a single source $s$ and a single sink $t$ that
has \vertex\ nodes.
We wish to cover all nodes of $D$ with as few chains from $s$ to $t$
as possible.
Considering the nodes of $D$ as a base set, and the chains from $s$ to
$t$ as sets of nodes, the problem at hand is equivalent to an instance
$I$ of the \textsc{set cover} problem.
Moreover, we recall that at each of its iterations the algorithm
\textsc{GreedyChainCover} adds to the solution the chain that covers
the maximal number of nodes not yet covered by the current solution.
In this sense, \textsc{GreedyChainCover} works in the exact same way
as the classical greedy algorithm for \textsc{set cover} would on the
equivalent instance $I$.
This algorithm is known to be a
$\log(\frac{\vertex}{opt})$-approximation
algorithm, where \vertex\ is the number of elements to cover and
$\mathcal{O}pt$ the cardinality of an optimal set cover (see for
example~\cite{Vazirani2001} or~\cite{Williamson2011} and a
slight refinement of the analysis found in~\cite{ShuchiChawlaWEB}).
In our case, the value of $\mathcal{O}pt$ is known to be the width
\width\ of the graph, hence the solution computed by the greedy
algorithm is $\log(\frac{\vertex}{\width})$-approximate.
In other words its cardinality is at most
$[\width \log (\frac{\vertex}{\width})]$.
\end{proof}

In section \ref{sec:exp} we provide an experimental evaluation
regarding the quality of this approximation, considering that
$[\width \log(\frac{\vertex}{\width})]$
is merely an upper bound on the approximation ratio.
Our experimental results suggest that the number of chains produced by
\textsc{GreedyChainCover} is actually very close to the optimum.


\subsection{Complexity Analysis}
\label{complex}

In this subsection we show that \textsc{GreedyChainCover} has lower
time complexity than~\cite{ChenChen2008}.
As proved in the Subsection~\ref{approx}, the algorithm
\textsc{GreedyChainCover} computes a set of chains of cardinality at
most $[\width \log(\frac{\vertex}{\width})]$. 
This also means that the algorithm will finish after at most
$[\width \log(\frac{\vertex}{\width})]$
iterations.
At each iteration, the algorithm updates the distance of each node
in
$\mathcal{O}(\edge)$
time and then computes the current longest chain also in
$\mathcal{O}(\edge)$ time.
Hence, the overall time complexity of \textsc{GreedyChainCover} is
$\mathcal{O}(\edge \width \log(\frac{\vertex}{\width}))$.

\begin{table*}[!htb]
\begin{center}
\begin{tabular}{|c|c|c|c|c|}
\hline
 & Topological & Chains & Labelling & Space \\
 & Sort & Construction & Time & Overhead \\
\hline
\cite{Jagadish1990} &
- &
$\mathcal{O}(\vertex^3)$ &
$\mathcal{O}(\width \edge)$ &
$\mathcal{O}(\width \vertex)$\\
\cite{ChenChen2008} &
$\mathcal{O}(\vertex + \edge)$ &
$\mathcal{O}(\vertex^2 + \width \vertex \sqrt{\width})$ &
$\mathcal{O}(\width \edge)$ &
$\mathcal{O}(\width \vertex)$ \\
\textsc{GreedyChainCover} &
$\mathcal{O}(\vertex + \edge)$ &
$\mathcal{O}(\chain \edge)$ &
$\mathcal{O}(\chain \edge)$ &
$\mathcal{O}(\chain \vertex)$ \\
\hline
\end{tabular}
\vspace{2mm}
\caption{
  \label{table:comp}
   Comparison of space and time complexities.
}
\end{center}
\end{table*}

In Table~\ref{table:comp} we provide a comparison in terms of both
time and space complexities among our method and those proposed
in~\cite{Jagadish1990} and~\cite{ChenChen2008}.
Note that in our case, the labelling phase, although performed in a
similar fashion as~\cite{ChenChen2008}, involves a larger
number $\chain$ of chains with respect to $\width$.
Namely
$\mathcal{O}(\width \edge)$,
and answering a connectivity query using the labelled data structure
only requires constant time.
Of the three compared methods, only~\cite{Jagadish1990} does not
require to previously compute a topological order of nodes.
The space required to store the reachability index constructed by
\textsc{GreedyChainCover} is larger when compared to
both~\cite{ChenChen2008} and~\cite{Jagadish1990}, as it is
proportional to the number of chains.
Namely, each of the \vertex nodes of the DAG is associated to a vector
$\lambda$ of labels that has as many elements as there are chains in
the structure. Hence, instead of the
$\mathcal{O}(\width \vertex)$
space required by a data structure with an optimal number \width\ of
chains, our structure needs
$\mathcal{O}(\chain \vertex)$
space.


\section{Experimental Results}
\label{sec:exp}

In order to assess the applicability and performance of the proposed
algorithm we devised a set of experiments taking into account
different categories of graphs.

Experiments were run on an Intel Core i$7$-$3770$, with $8$ CPUs running
at $3.40$~GHz, $16$~GBytes of main memory DDR-III $1333$, and
hosting a Ubuntu $12.04$ LTS $64$ bits Linux distribution.
The time limit was set to $7200$~seconds (i.e., $2$ hours), and the
memory limit to $16$~GB.
We ran our algorithm over the selected set of benchmarks to compare
the generated number \chain\ of chains with the known optimum value
\width.
The \width\ value was computed deriving the length of the biggest
anti-chain associated with our graph when interpreted as a
{\em poset}, taking into account the theoretical aspect presented
in~\cite{Asratian1998}.
Such a procedure is quite straightforward as it simply requires to
apply a maximum matching algorithms, such as Hopcroft--Karp, on an
appropriately generated bipartite graph.
Nevertheless, for very large graph the time required to perform this
computation is non negligible, but we do not report evidence on that
issue.
Moreover, we report the time and memory require to run our algorithm.
Section~\ref{subsubsec:directComparison} also report some data to
compare our methodology with the one reported
in~\cite{ChenChen2008,ChenChen2011}.

The experiments altogether tested five groups of data:
\begin{description}
\item [Dense graphs]
In this category fall graphs with a predefined number of nodes.
Edges are added randomly until a certain density\footnote{
We define the density of a graph as $m/n^{2}$, and we usually specify
it as a percentage value.
}
threshold (either $12.5$\% or $25$\% in our case) is reached.
\item[Tree-based graphs]
This category includes graphs derived from a random tree, with a
predefined number of initial nodes, each of which with a specific
maximum degree.
Such a tree is derived from a random Pr{\"u}fer sequence.
The final graph is the created adding a given number of additional
edges.
\item[Layered graphs]
In this category fall graphs generated as a sequence of layers, each
characterized by a given number of nodes and for which each node has a
specific maximum out-degree.
An edge between two nodes is allowed only if such nodes reside in two
contiguous layers.
\item [General graphs]
To this category belong graphs generated randomly~\cite{randomGraphs},
considering one of the following criteria:
\begin{itemize}
\item
Regularity: a d-regular random graph with a delimited number of nodes.
\item
Probability: a Erd{\H o}s-R\'enyi  graph with a predefined number of
nodes, and an edge between each pair of nodes with a given
probability.
\item
Cardinality: a random graph with a bounded number of nodes and edges.
\end{itemize}
\item [Circuit-based graphs]
This category includes graphs derived from benchmarks of real-word
devices usually adopted by the formal verification community.
Given a design, we produce a benchmark by extracting the combinational
portion of the sequential circuit.
Benchmarks were selected in order to include a significant range of
variability in terms of design characteristics.
\end{description}
Notice that the first four graph sets are the ones introduced
by~\cite{ChenChen2011}.
For sake of generality, we generate those graphs with the publicly
available tool {\tt NetworkX}~\cite{networkx}.
Circuit-based benchmarks are obtained from the
\textsc{Hardware Model Checking Competition}
suites~\cite{modelCheckingCompWebPage}, and mostly come from
industrial designs.
We include this category to demonstrate the applicability of the
proposed method on practical tests from the formal verification
community, rather than just limiting ourselves to experiments on
randomly generated benchmarks.
The specific characteristics of each experimental instance, such as
number of vertices, edges, density, etc., are reported in the
following sections, alongside the results obtained.


\subsection{Comparison}
\label{subsubsec:directComparison}

As a preliminary set of experiments, we compare our technique with our
implementation of the exact one presented in~\cite{ChenChen2008}.
The original description reported on~\cite{ChenChen2008} does not
completely depict how the algorithm behaves in some limit cases
that caused the procedure to generate incorrect, that is, non-optimum,
results.
To address this issue, we took into account unreferenced material
published by the same authors.
With our final implementation, we were not able to obtain times
comparable with those proposed in~\cite{ChenChen2008}, and experiments
on graphs larger than a few thousand nodes usually did not terminate
within the slotted time limit, despite the fact that we run our
experiments on a significantly more performing hardware platform.
As a consequence, we restrict our comparison on a set of small dense
graphs such as the ones reported by~\cite{ChenChen2008}.
Table~\ref{table:smalldense} summarizes those results.

\begin{table}[!htb]
\caption{Experimental results on small dense graphs.}
\label{table:smalldense}
\begingroup
\renewcommand*{\arraystretch}{1.3}
\begin{center}
\resizebox{\linewidth}{!}{
\begin{tabularx}{.99\textwidth}{ s | s | s | s | s | s | s | s | b   }
$\vertex$ & $\edge$ & Density & Depth & Width & Chains & $\epsilon$
& \multicolumn{2}{c}{Time [s]} \\
& & [\%] & & & & [\%] & Greedy & \cite{ChenChen2008} \\
\hline
\hline
1000 &  125000 & 25 &  342 & 10 & 13 & 30.00 & 0.47 &   6.95 \\
1000 &  125000 & 25 &  339 &  9 & 11 & 22.22 & 0.50 &   6.69 \\
2000 &  500000 & 25 &  668 & 10 & 13 & 30.00 & 0.87 &  33.37 \\
2000 &  500000 & 25 &  683 &  9 & 12 & 33.33 & 0.70 &  31.22 \\
3000 & 1250000 & 25 & 1043 & 10 & 13 & 30.00 & 1.23 & 107.30 \\
3000 & 1250000 & 25 & 1056 & 11 & 12 &  9.09 & 1.32 &  94.77 \\
\hline
\end{tabularx}}
\end{center}
\endgroup
\end{table}

Columns $n$ and $m$ report the number of vertices and edges in the
graph.
For each instance, we indicate its density, depth and width
(corresponding to the minimum number of chains \width).
Not surprisingly, dense graphs present a rather low width as the high
number of connections among vertices limits the number of mutually
unreachable nodes.
As expected, on graphs with low widths our procedure yields rather
high relative error values.
Column Chains report the \chain\ value, and column $\epsilon$ the
relative error  ($((\chain - \width) / \width) \cdot 100)$).
The relative high error is due to the greedy behaviour of our
approach when running on those graphs.
The algorithm has little room to recover from sub-optimal choices made
during its first steps given the low optimum number of chains.
At the same time, tough, such a low optimum renders the impact of the
relative error rather inconsequential in practice as the impact on the
memory usage is quite moderate.
The graphs with $3000$ vertices have the same density of the
ones used by~\cite{ChenChen2008,ChenChen2011}.
In~\cite{ChenChen2008} the authors present an average labelling time
for these graphs of about $20$ seconds.
Our implementation of their algorithm is able to deal with those graphs
in about $100$ seconds, on average.
As previously noticed, our hardware platform should be faster than the
one used by the original authors, then the difference may be due to
some undefined shortcuts of our implementation or some optimizations
not disclosed by the authors.
In any case, our method perform labelling in a little more than a
second.
Then it should be at least about one order of magnitude
faster than the original algorithm presented in~\cite{ChenChen2008}.
Similar considerations hold for a direct comparison
with~\cite{ChenChen2011}, where the authors, using spanning trees
instead of chains, again on a similar set of graphs, present labelling
time in the order of $15$ seconds.


\subsection{Randomly generated graphs}
\label{subsec:random}

In this subsection we present experimental results concerning
randomly-generated graph instances.
In order to test the proposed techniques on benchmarks with different
topologies and characteristics, we take into account the four graph
categories previously introduced,
For each category, we provide a dedicated section, reporting the size
and characteristics of each benchmark instance together with the results
obtained with the proposed technique, i.e., the number of generated
chains (\chain) and the actual graph width (\width).
We also report execution time and memory for each run. 

As far as memory usage is concerned, notice that adopting the
implementation depicted in Section~\ref{sec:indexDef}, each chain
entry can be represented by an integer value.
Then, the memory usage is the one necessary to store a number of
integer values equal to $\vertex \chain$.
Chen et al.~\cite{ChenChen2008,ChenChen2011} use $16$-bit integer
values to store chains.
On the contrary, as we manipulate graphs larger than $65536$ vertices,
we use $32$-bit integer values.
Further possible optimizations, e.g., data compression
schemes such as the one proposed by van Schaik et
al.~\cite{vanSchaik2011}, are not taken into account in the tables.


\subsubsection{Dense graphs}
\label{subsubsec:dense}

Table~\ref{table:dense} illustrates the results on dense graphs larger
than the ones considered in Section~\ref{subsubsec:directComparison}.
Dense graphs tend to be characterized by low values of width: Given
the high connectivity among nodes, it is unlikely to have a large
subset of mutually unreachable nodes.
For this reason, the error can potentially grow quite large, since the
proposed greedy approach terminates in a small number of steps, and
subsequently has fewer chances to recover from sub-optimal choices.
At the same time, such an error is relative to a small number of
chains, thus its impact (in term of memory usage) is quite small in
practice.

\begin{table}[!htb]
\caption{Experimental results on dense graphs.}
\label{table:dense}
\begingroup
\renewcommand*{\arraystretch}{1.3}
\begin{center}
\resizebox{\linewidth}{!}{
\begin{tabularx}{.99\textwidth}{ s | s | s | s | s | s | s | s | s }
 $n$ & $m$ & Density & Depth & Width & Chains & $\epsilon$ & Time & Memory \\
 & & [\%] & & & & [\%] & [s] & [MB]\\
\hline
\hline
 2002 &   500008 & 12.48 &  684 & 11 & 13 & 18.18 &  0.49 & 0.10 \\
 2002 &  1000004 & 24.96 & 1146 &  6 &  7 & 16.67 &  0.87 & 0.05 \\
 4002 &  2000009 & 12.49 & 1385 & 10 & 12 & 20.00 &  1.85 & 0.18 \\
 4002 &  4000006 & 24.98 & 2278 &  7 &  7 &  0.00 &  2.44 & 0.11 \\
 6002 &  4500009 & 12.49 & 2069 & 11 & 13 & 18.18 &  3.04 & 0.31 \\
 6002 &  9000005 & 24.99 & 3446 &  7 &  7 &  0.00 &  4.16 & 0.16 \\
 8002 &  8000010 & 12.50 & 2726 & 11 & 13 & 18.18 &  4.48 & 0.42 \\
 8002 & 16000003 & 24.99 & 4573 &  7 &  7 &  0.00 &  6.99 & 0.21 \\
10002 & 12500009 & 12.50 & 3441 & 11 & 14 & 27.27 &  7.07 & 0.53 \\
10002 & 25000005 & 24.99 & 5838 &  7 &  7 &  0.00 & 11.52 & 0.27 \\
\hline
\end{tabularx}}
\end{center}
\endgroup
\end{table}

Table~\ref{table:dense} reports graphs with increasing size,
considering for each size two density values, i.e., about $12.5$\% and
$25.0$\%.
Results in terms of width, chains, errors and times are pretty
regular, then we just report results for $10$ different graphs.
All running times remain in the order of a few seconds, and memory
usage stays below $1$~MBytes for all cases.


\subsubsection{Tree-based graphs}
\label{subsubsec:tree}

In this subsection we present experimental results concerning
tree-based instances.
Such a set of graphs is characterized by a peculiar topology and
structure.
Since they are derived from a tree, these graphs tend to get wider as
the depth (in terms of distance from the root) increases.
This unusual structure is rather problematic for techniques such
as~\cite{ChenChen2008}, as it force the algorithms to introduce a
significant number of additional nodes during the chain generation
phase.
On the contrary, our proposed technique generate chains which share
many nodes in the narrower part of the graphs and then spread out
following the branches of the tree.

Table~\ref{table:tree} shows our results.

\begin{table}[!htb]
\caption{Experimental results on graphs derived from trees.}
\label{table:tree}
\begingroup
\renewcommand*{\arraystretch}{1.3}
\begin{center}
\resizebox{\linewidth}{!}{
\begin{tabularx}{.99\textwidth}{ q | q | q | q | q | q | q | q | q }
 $n$ & $m$ & Density & Depth & Width & Chains & $\epsilon$ & Time & Memory \\
 & & [\%] & & & & [\%] & [s] & [MB]\\
\hline
\hline
  5004 &  16825 & 0.067 &  767 &   670 &   753 & 11.02 &    1.52 &    14.37 \\
  5004 &  31732 & 0.127 &  937 &   319 &   393 & 18.83 &    1.08 &     7.50 \\ 
 10004 &  23635 & 0.024 & 1075 &  1801 &  1982 &  9.13 &    5.51 &    75.65\\
 10004 &  28648 & 0.029 & 1131 &  1496 &  1697 & 11.84 &    4.77 &    64.76\\
 10004 &  33607 & 0.034 & 1043 &  1255 &  1440 & 12.85 &    4.78 &    54.95\\
 15004 &  30442 & 0.014 &  986 &  3264 &  3490 &  6.48 &   16.32 &   199.75\\
 15004 &  35513 & 0.016 & 1235 &  2682 &  2973 &  9.79 &   14.18 &   170.16\\
 15004 &  40536 & 0.018 & 1319 &  2412 &  2703 & 10.77 &   11.46 &   154.71\\
 20004 &  37343 & 0.009 & 1045 &  5014 &  5273 &  4.91 &   25.29 &   402.38\\
 20004 &  42344 & 0.011 & 1242 &  4182 &  4511 &  7.29 &   24.66 &   344.23\\
 50004 &  78431 & 0.003 & 1566 & 15634 & 15970 &  2.10 &  300.61 &  3046.28\\
100004 & 151242 & 0.002 & 2339 & 19343 & 19615 &  1.40 &  911.39 &  7482.83\\
150004 & 403242 & 0.002 & 2986 & 23277 & 23435 &  0.68 & 1431.92 & 13409.97 \\
\hline
\end{tabularx}}
\end{center}
\endgroup
\end{table}

The meaning of the columns is the one described for
Table~\ref{table:smalldense}.
Errors are much smaller than for dense graphs, and they tend to decrease
for larger graphs where our greedy approach seems to close the gap with
the exact technique in terms of accuracy.
Our memory usage is then almost identical to the one required by the
exact technique.
In both cases large values of memory usage are due to the high number
of chains found for those graphs, since, as previously
mentioned, memory usage is in the order of $\mathcal{O}(\chain
\vertex)$, and it is the intrinsic limitation of transitive closure
compression techniques.
Luckily, the time required to generate the index is much smaller
than exact techniques being in the order of tens of seconds for all
benchmarks but the largest instances.


\subsubsection{Layered graphs}
\label{subsubsec:layered}

Layered graphs are characterized by a regular structure, in which
nodes are arranged in layers each presenting the same number of nodes.
Nodes in each layer can only reach nodes in the directly subsequent
layer.
This kind of structure intrinsically favours matching-based algorithms
for transitive closure compression.
The number of nodes in each layer is a trivial lower bound to the
width of the graph itself, and thus to the number of chains.
Furthermore, chains in this kind of graphs, which are usually limited
in terms of depth, tend to be short.

\begin{table}[!htb]
\caption{Experimental results on layered graphs.}
\label{table:layer}
\begingroup
\renewcommand*{\arraystretch}{1.3}
\begin{center}
\resizebox{\linewidth}{!}{
\begin{tabularx}{.99\textwidth}{  q | q | q | q | q | q | q | q | q }
 $n$ & $m$ & Density & Depth & Width & Chains & $\epsilon$ & Time & Memory \\
 & & [\%] & & & & [\%] & [s] & [MB]\\
\hline
\hline
 2502 &  12007 & 0.192 &  6 &  508 &  623 & 22.64 & 0.94 & 5.95 \\ 
 5002 &  23763 & 0.095 &  6 & 1009 & 1260 & 24.88 & 1.64 & 24.04\\
 5002 &  25517 & 0.102 & 11 &  507 &  651 & 28.40 & 1.38 & 12.42\\
 5002 &  47428 & 0.190 & 11 &  502 &  582 & 15.94 & 1.35 & 11.11\\
 5002 & 112185 & 0.448 & 11 &  501 &  535 &  6.79 & 1.99 & 10.21\\
 7502 &  38909 & 0.069 & 16 &  506 &  650 & 28.46 & 1.58 & 18.60\\
 7502 &  73523 & 0.131 & 16 &  503 &  584 & 16.10 & 2.26 & 16.71\\
 7502 &  90678 & 0.161 & 16 &  502 &  565 & 12.55 & 2.06 & 16.17\\
10002 &  29882 & 0.030 & 21 &  549 &  785 & 42.99 & 2.32 & 29.95\\
10002 &  51254 & 0.051 & 11 & 1014 & 1300 & 28.21 & 3.91 & 49.60 \\
10002 &  53093 & 0.053 & 21 &  505 &  645 & 27.72 & 2.64 & 24.61 \\
10002 &  95505 & 0.095 & 11 & 1003 & 1157 & 15.35 & 5.03 & 44.14\\
10002 & 226435 & 0.226 & 11 & 1001 & 1075 &  7.39 & 7.53 & 41.02\\
15002 &  79005 & 0.035 & 16 & 1009 & 1294 & 28.25 & 7.61 & 74.05\\
15002 & 146627 & 0.065 & 16 & 1005 & 1166 & 16.02 & 8.75 & 66.73\\
15002 & 182637 & 0.081 & 16 & 1003 & 1130 & 12.66 & 9.24 & 64.67\\
20002 &  59806 & 0.015 & 21 & 1083 & 1549 & 43.03 & 8.94 & 118.19 \\
20002 & 105982 & 0.026 & 21 & 1014 & 1303 & 28.50 & 8.93 & 99.42 \\
\hline
\end{tabularx}}
\end{center}
\endgroup
\end{table}

The experimental evaluation shows that the proposed greedy technique
produces results that are sometimes quite far from the optimum.
However, we can identify a decreasing trend of the relative error as
the number of edges increases.
Intuitively, the graph structure easily leads the greedy algorithm to
several local minimum.
This is due to the fact that a choice at a certain level is definitive
and has a direct impact on the set of reachable nodes on the layers
above.
As the number of edges increases, a sub-optimal choice at a certain
level can still be improved given the higher number of alternative
paths among nodes.

Table~\ref{table:layer} illustrates the results on layered graphs.
For time and memory figures it is possible to make similar
considerations to the one presented in Section~\ref{subsubsec:tree}.


\subsubsection{General graphs}
\label{subsubsec:general}

As a last set of randomly-generated graph, we take into account
completely random graphs created with two different generation models.
In the first model, we pre-define both the number of nodes and edges,
selecting a graph with such numbers of nodes and edges with equal
probability among all possible graphs with those characteristics.
Alternatively, in the second model, we specified just the number $n$
of vertices, and then an edge was added to the graph, between any two
given nodes, with a chosen probability.

\begin{table}[!htb]
\caption{Experimental results on ``general'' graphs with different characteristics.}
\label{table:general}
\begingroup
\renewcommand*{\arraystretch}{1.3}
\begin{center}
\resizebox{\linewidth}{!}{
\begin{tabularx}{.99\textwidth}{  q | q | q | q | q | q | q | q | q}
 $n$ & $m$ & Density & Depth & Width & Chains & $\epsilon$ & Time & Memory \\
 & & [\%] & & & & [\%] & [s] & [MB]\\
\hline
\hline
  1002 &   10102 & 1.007 &  40 &    83 &   111 & 33.73 &    0.58 &     0.42 \\
  1002 &   20051 & 1.999 &  81 &    41 &    54 & 31.71 &    0.61 &     0.21 \\
  1002 &  100011 & 9.971 & 303 &    10 &    11 & 10.00 &    0.82 &     0.04 \\
 10002 &   18659 & 0.019 &  10 &  5244 &  5490 &  4.69 &    7.71 &   209.47 \\
 10002 &   24936 & 0.025 &  15 &  3477 &  3923 & 12.83 &    7.01 &   149.68 \\
 10002 &   52634 & 0.053 &  30 &  1601 &  2026 & 26.55 &    4.79 &    77.30\\
 10002 &  100986 & 0.101 &  51 &   816 &  1085 & 32.97 &    5.34 &    41.40 \\
 10002 &  250138 & 0.250 & 115 &   334 &   445 & 33.23 &    4.34 &    16.98\\
 10002 &  501785 & 0.502 & 225 &   170 &   223 & 31.18 &    2.47 &     8.51\\
 50002 & 3123951 & 0.125 & 311 &   620 &   877 & 41.45 &   86.30 &   167.28\\
 50002 & 6248250 & 0.250 & 580 &   320 &   447 & 39.69 &   69.39 &    85.26\\
 75002 & 2342341 & 0.042 & 434 &  9254 &  9878 &  6.74 &  393.50 &  2826.19\\
100002 &  166694 & 0.002 &   9 & 41323 & 42257 &  2.26 & 1355.15 & 16120.09\\
\hline
\end{tabularx}}
\end{center}
\endgroup
\end{table}

Table~\ref{table:general} illustrates the results on general graphs.
Density is usually quite small, being those graph relatively sparse.
Exact (column Width, \width) and over-estimated chains (column Chains,
\chain) present a very high variance.
Similarly, it is possible to notice a wide range of relative errors.
The higher values are associated with graphs with lower widths,
whereas relative errors tends to decrease for larger graphs.
From the experimental results, it is possible to appreciate the
applicability of the proposed technique on graphs up to 100,000 nodes,
without an inordinate computational effort.


\subsection{Circuit-based graphs}
\label{subsec:circuit}

In this subsection we present experimental results concerning
circuit-based instances.
These kind of circuit are derived from real-life benchmarks usually
adopted in the field of hardware formal verification.
To apply Model Checking on these benchmarks it is usually required to
run a large variety of pre-processing steps with various techniques.
Many of those techniques explicitly require mutual vertex reachability,
eventually restricted to specific sub-set of
vertices~\cite{date2013coi,spe2015}, such as the ones representing
primary inputs, primary outputs, or present or next state variables.
As a consequence, this section provides some sort of case study to
illustrate the applicability of our method beyond mere randomly built
graphs.

We run our experiments on a large set of benchmarks, testing $135$
different designs derived from~\cite{modelCheckingCompWebPage}.
Figures~\ref{fig:aig}.a and~\ref{fig:aig}.b plot the error for all
designs, and, for the sake of compactness,
Tables~\ref{table:circuitDetails} and~\ref{table:circuit} report
data for just $20$ of them.
Those $20$ designs are selected as follows.
The first $10$ designs are the biggest ones in terms of number of
gates (\textit{e.g.}, graph vertices) our algorithm was able to
manipulate in the allowed time limit.
Those designs are reported in the first part of the two tables.
The last $10$ designs includes cases on which our algorithm presented
the worst performance in terms of accuracy.
Those designs are reported in the second part of the two tables.

More in details, Table~\ref{table:circuitDetails} illustrate the
characteristics of the benchmarks, i.e., the number of primary inputs
(\#PIs), latches (\#FFs), and gates (\#ANDs).
All benchmarks are single output circuits (being the output the
property under verification), thus the number of primary outputs is
omitted being always equal to $1$.
The number of primary inputs plus the number of latches determine the
number of original sources of the derived graphs.
Likewise, the number of primary outputs plus the number of latches
determine the number of original sinks.
Note that latches are modelled both as sources and sinks as they
realize the sequential behaviour of the circuits.
From a structural stand point, given the fact that the nodes within
our derived graph can either originate from latches or logic gates,
the in-degree of each node in the graph is at most two, whereas the
out degree is unbounded\footnote{
  Those benchmarks are usually stored in AIGER format, representing
  designs using AIGs (And-Inverter Graphs), i.e., using as basic
  logic blocks only NOT and two-input AND gates.
}.

\begin{table}[!htb]
\caption{Details concerning AIG-derived circuits.}
\label{table:circuitDetails}
\begingroup
\renewcommand*{\arraystretch}{1.3}
\begin{center}
\resizebox{.8\linewidth}{!}{
\begin{tabularx}{.8\textwidth}{ Y | Y | Y | Y }
Benchmark & \#PIs & \#FFs & \#ANDs \\
\hline
\hline
6s404rb1  & 202 & 9801 & 126011 \\
6s349rb06   & 254 & 14090 & 118595 \\
6s289rb05233   & 1085 & 12707 & 115953 \\
6s119   & 579 & 18833 & 107093 \\
6s290   & 543 & 13679 & 107242 \\
6s330rb06   & 1056 & 7728 & 109695 \\
6s344rb150   & 553 & 10669 & 87711 \\
6s384rb024   & 7415 & 14952 & 65415 \\
6s288r   & 12124 & 2461 & 71058 \\
6s321b5   & 22 & 13126 & 66695 \\
\hline
6s317b18   & 42 & 198 & 4849 \\
6s357r   & 17 & 196 & 1234 \\
6s268r   & 74 & 1324 & 7891 \\
6s22   & 73 & 1126  & 17243 \\
6s394r   & 17 & 180  & 930 \\
6s335rb09   & 112 & 1658 & 10813 \\
bobsmhdlc1   & 61 & 290  & 1626 \\
6s399b02   & 47 & 98  & 9783 \\
6s399b03   & 47 & 98 & 9783 \\
beemrshr2f1   & 17 & 490  & 26068 \\
\hline
\end{tabularx}}
\end{center}
\endgroup
\end{table}

Table~\ref{table:circuit} shows the results of our experiments.
All columns have the same meaning previously introduced, but the first
one, reporting the name of the benchmark.
As it can be noticed, the largest instances have more than $125,000$
vertices, and are managed in a few hundreds of seconds, obtaining
extremely small errors.

\begin{table}[!htb]
\caption{Experimental results on graphs derived from circuits.}
\label{table:circuit}
\begingroup
\renewcommand*{\arraystretch}{1.3}
\begin{center}
\resizebox{\linewidth}{!}{
\begin{tabularx}{.99\textwidth}{ b | q | q | s | q | q | q | q | q | q}
Graph & $n$ & $m$ & Density & Depth & Width & Chains & $\epsilon$ & Time & Memory \\
 & & & [\%] & & & & [\%] & [s] & [MB]\\
\hline
\hline
6s404rb1 & 136016 & 271831 & 0.001 & 137 & 26937 & 27732 & 2.95 & 831.06	& 14389.02 \\
6s289rb05233 & 129747 & 268956 & 0.002 & 28 & 42631 & 42773 & 0.33 & 1394.51 & 21170.30\\
6s119 & 126507 & 252431 & 0.002 & 111 & 37839 & 37915 & 0.20 & 927.02 & 18297.00 \\
6s290 & 121466 & 242093 & 0.002 & 43 & 42429 & 42790 & 0.85 & 1104.13 & 19827.00\\
6s330rb06 & 118481 & 236015 & 0.002 & 50 & 37697 & 37889 & 0.51 & 919.47 & 17124.66\\
6s344rb150 & 98935 & 197712 & 0.002 & 53 & 39454 & 39700 & 0.62 & 802.52 & 14983.06\\
6s384rb024 & 87784 & 174298 & 0.002 & 31 & 38252 & 38419 & 0.44 & 601.57 & 12865.35\\
6s288r & 85645 & 159161 & 0.002 & 108 & 24261 & 24313 & 0.21 & 365.26 & 7943.29\\
6s321b5 & 79845 & 159669 & 0.003 & 43 & 37231 & 37320 & 0.24 & 565.49 & 11367.09\\
\hline
6s317b18 & 5091 & 10167 & 0.039 & 45 & 1058 & 1140 & 7.75 & 1.37 & 22.14 \\
6s357r & 1449 & 2877 & 0.137 & 25 & 350 & 374 & 6.86 & 0.2 & 2.07 \\
6s268r & 9291 & 18495 & 0.021 & 35 & 2442 & 2609 & 6.84 & 4.58 & 92.47\\
6s22 & 18444 & 36811 & 0.011 & 132 & 3094 & 3305 & 6.82 & 9.69 & 232.53 \\
6s394r & 1129 & 2237 & 0.176 & 63 & 229 & 244 & 6.55 & 0.12 & 1.05\\
6s335rb09 & 12585 & 25081 & 0.016 & 43 & 4428 & 4675 & 5.58 & 8.04 & 244.44 \\
bobsmhdlc1 & 1979 & 3907 & 0.100 & 18 & 612 & 644 & 5.23 & 0.51 & 4.86\\
6s399b02 & 9930 & 19833 & 0.020 & 131 & 967 & 1012 & 4.65 & 2.29 & 38.33\\
beemrshr2f1 & 26577 & 52999 & 0.008 & 145 & 5832 & 6078 & 4.22 & 26.59 & 616.21\\
\hline
\end{tabularx}}
\end{center}
\endgroup
\end{table}

Figures~\ref{fig:aig}.a and~\ref{fig:aig}.b visually represent the
distribution of the number of chains with respect to the actual width
of the graph and the relative error distribution, respectively.
In Figure~\ref{fig:aig}.a the grey area represents the optimum,
whereas the overlying black area (almost invisible) represents the
``surplus'' due to the greedy approach.
Figure~\ref{fig:aig}.b plots the distribution of the error, sorted by
increasing values, in percentage, in order to underline how many
instances fall below a given error threshold.

\begin{figure}
\centering
\resizebox{.48\linewidth}{!}{
  \subfigure[]{\includegraphics[scale=1]{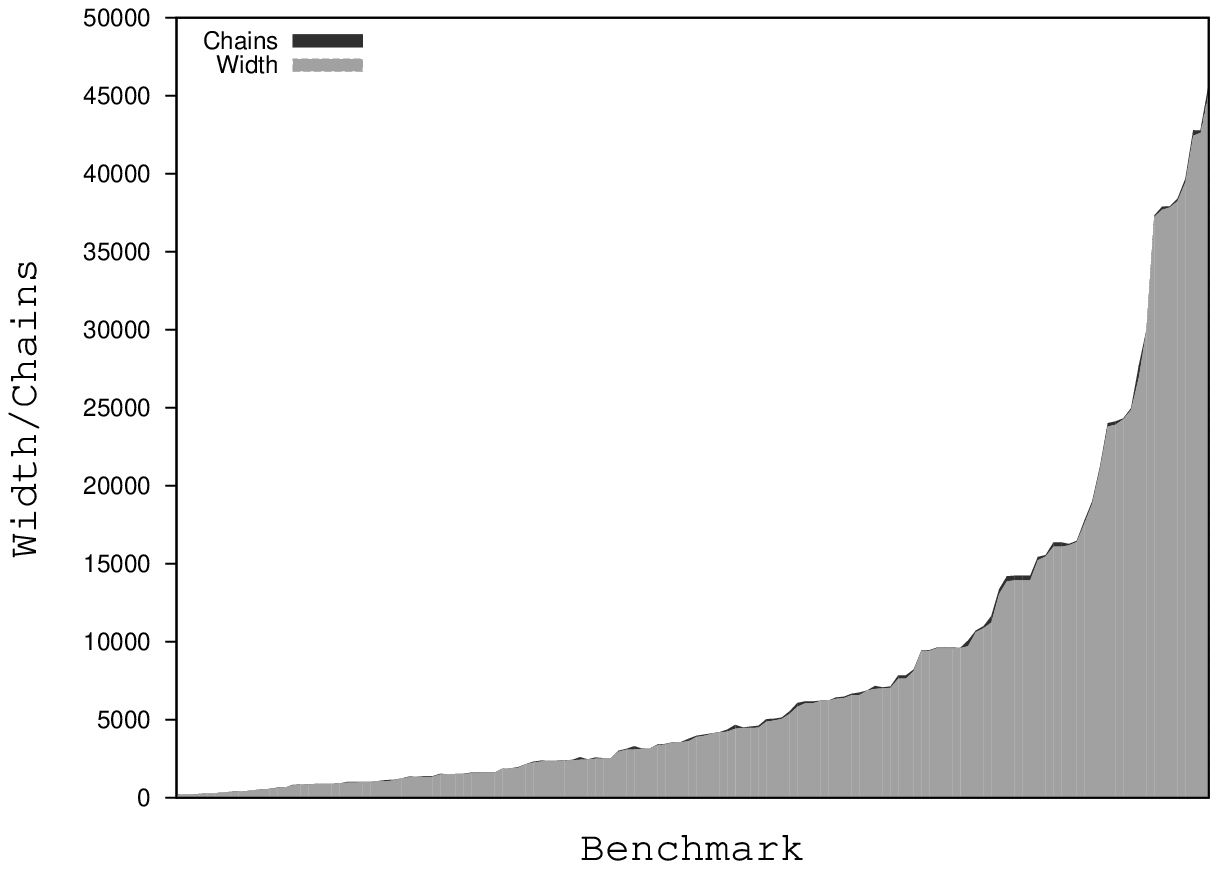}}
}
\quad
\resizebox{.48\linewidth}{!}{
  \subfigure[]{\includegraphics[scale=1]{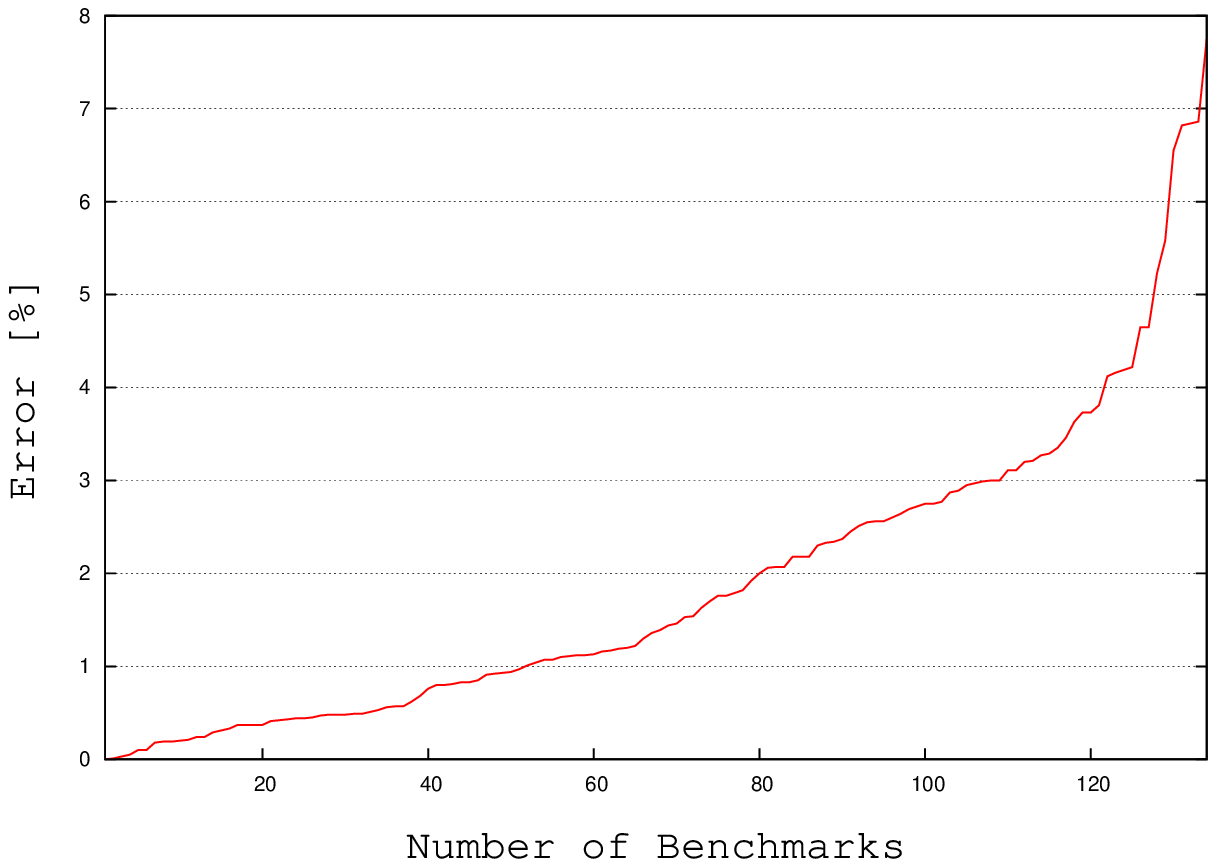}}
}
\caption{
  \label{fig:aig}
  Visual representation on experiments using circuit-derived graphs:
  Number of chains compared to exact width of the graph (a), and
  distribution of the error w.r.t. to the optimum value (b).
}
\end{figure}

The results presented shows that the number of chains found by our
greedy approach is often very close to the optimum.
The error percentage on the number of chains lies, in fact, within a
$4$\% threshold for most of the instances.
This suggests that our approach, despite being sub-optimal, can still
be applied to large graph instances without incurring in excessive
memory overhead, while significantly reducing computing time.


\section{Conclusions}
\label{sec:conclusions}

We proposed a novel greedy approach for computing the reachability
index for DAGs.
Our algorithm improves on existing methods with respect to time
requirements, while relaxing the need of optimal memory consumption
for storing such data structure.
We provided an upper bound to the time and memory complexity of our
approach.
Our experimental analysis shows that the time-memory trade-off
achieved by our algorithm is favourable, since our results are often
very close to the optimum in terms of space requirements while
significantly improving on computing time.
Moreover, the computed errors tend to decrease with graphs of
increasing size.
We deem that our approach is profitable in contexts where reachability
queries on large graphs must be answered but only a limited amount of
time can be dedicated to pre-processing.


\bibliographystyle{abbrvnat}
\bibliography{./graphTC.bbl}

\end{document}